\documentclass[conference]{IEEEtran}
\IEEEoverridecommandlockouts
\usepackage{optidef}
\usepackage{amsmath,amsfonts}
\usepackage{algorithm}
\usepackage{array}
\usepackage[caption=false,font=normalsize,labelfont=sf,textfont=sf]{subfig}
\usepackage{textcomp}
\usepackage{stfloats}
\usepackage{url}
\usepackage{verbatim}
\usepackage{amsmath,amssymb,amsfonts}
\usepackage{mathtools,physics, bbm}
\usepackage{amsthm}
\newtheorem{theorem}{Theorem}
\usepackage{graphicx}
\usepackage{xcolor}
\newtheorem*{remark}{Remark}

\newtheorem{definition}{Definition}
\newtheorem{corollary}{Corollary}

\usepackage{mathrsfs}

\usepackage{cite}
\usepackage{amsmath,amssymb,amsfonts}
\usepackage{algorithmic}
\usepackage{graphicx}
\usepackage{textcomp}
\usepackage{xcolor}

\newcommand{\cH}{\mathcal{H}}

\newcommand{\bi}{\boldsymbol{i}}
\newcommand{\bj}{\boldsymbol{j}}
\newcommand{\bk}{\boldsymbol{k}}
\newcommand{\bx}{\mathbf{x}}

\def\BibTeX{{\rm B\kern-.05em{\sc i\kern-.025em b}\kern-.08em
    T\kern-.1667em\lower.7ex\hbox{E}\kern-.125emX}}

\newcommand{\supp}{\mathrm{supp}}
\newcommand{\wt}{\mathrm{wt}}

\usepackage{lipsum}

\newcommand\blfootnote[1]{%
  \begingroup
  \renewcommand\thefootnote{}\footnote{#1}%
  \addtocounter{footnote}{-1}%
  \endgroup
}

\begin{document}

\title{Bounds for Pure Disjoint $(r,\delta)$-Quantum Locally Recoverable Codes}

\author{%
  \IEEEauthorblockN{Evagoras Stylianou and Holger Boche}
  \IEEEauthorblockA{Chair of Theoretical Information Technology, Technical University of Munich, \\  
  Email: \{evagoras.stylianou@tum.de, boche@tum.de\}
}\vspace{-0.8cm}
}

\maketitle

\begin{abstract}
We study pure disjoint \((r,\delta)\)-quantum locally recoverable codes (qLRCs) without assuming a stabilizer structure. We formulate local Knill--Laflamme conditions for recovery from up to \(\delta-1\) erasures within a recovery block, and introduce blockwise Shor--Laflamme and unitary weight enumerators that capture how error weight is distributed across recovery sets. We establish several properties of these enumerators and use them to derive a Singleton-like bound that strengthens the known bound for disjoint \((r,\delta)\)-qLRCs under a purity assumption, as well as a linear-programming upper bound on the code dimension. These results provide a non-stabilizer, weight-enumerator-based approach to the study of pure disjoint \((r,\delta)\)-qLRCs.
\end{abstract}\vspace{-0.07cm}

\section{Introduction} \vspace{-0.1cm}

\blfootnote{H. B. is also affiliated with the Munich Quantum Valley (MQV) and the Munich Center for Quantum Science and Technology (MCQST). The work of E. Stylianou and H. Boche was partly supported by the German Federal Ministry of Research, Technology and Space (BMFTR) within the national initiative on 6G Communication Systems through the research hub 6G-life under Grants 16KISK002, 16KIS2414 and QUARKS, Grant 16KIS1999,
 the BMBF Quantum Projects QUIET, Grant 16KISQ093, QD-CamNetz, Grant 16KISQ077, and QuaPhySI, Grant 16KIS1598K. H. Boche acknowledges funding by the German Research Foundation as part of Germany’s Excellence Strategy - EXC 2050/2 - Project ID 390696704 - Cluster of Excellence "Centre for Tactile Internet with Human-in-the-Loop" (CeTI) of Technische Universität Dresden. H. Boche was also supported by the LTI TUM collaboration with Princeton through the BMFTR as part of the 6G-Atlantic Bridges project.}
Quantum locally recoverable codes (qLRCs) have recently attracted significant
attention~\cite{golowich2023quantum, sharma2025quantum, luo2025bounds,
galindo2026quantum, xie2025two, li2025improved, cao2025optimal}
as a framework for protecting quantum information under locality constraints. In an
$(r,2)$-qLRC, the erasure of a single qudit can be corrected by accessing at most
$r$ other qudits, thereby avoiding recovery operations that act on the entire system.

The study of qLRCs was initiated by Golowich and Guruswami~\cite{golowich2023quantum},
who introduced $(r,2)$-qLRCs and constructed explicit families via the CSS
construction~\cite{calderbank1998quantum}. They also derived a quantum Singleton-like bound for these codes, which, unlike
its classical counterpart~\cite{gopalan2012locality}, can be strengthened when
the recovery sets are disjoint. Luo \emph{et al.}~\cite{luo2025bounds}
subsequently obtained further bounds for CSS-based $(r,2)$-qLRCs, including
quantum Singleton-like and CM-type bounds. Galindo \emph{et al.}~\cite{galindo2026quantum}
then introduced the more general notion of $(r,\delta)$-qLRCs in the stabilizer
setting and established a correspondence with dual-containing classical
$(r,\delta)$-LRCs, leading in particular to a Singleton-like bound for
stabilizer $(r,\delta)$-qLRCs.  More recently, Li \emph{et al.}~\cite{li2025optimal,li2025improved}
derived further bounds for $(r,2)$-qLRCs. With the exception of the quantum Singleton-like bounds in~\cite{golowich2023quantum}, the known qLRC bounds have largely been obtained through connections with underlying classical codes.

A key technique for deriving upper bounds on code parameters is the use of weight
enumerators and the linear-programming (LP) method~\cite{delsarte1973algebraic}.
In the quantum setting, Shor and Laflamme~\cite{shor1997quantum} introduced the
first quantum weight enumerators, now known as the Shor--Laflamme (SL) weight
enumerators, together with the corresponding quantum MacWilliams identities.
Rains~\cite{rains2002quantum} later introduced the unitary weight enumerators,
which are invariant under code equivalence, and related them both
to correctable erasure patterns and to the SL weight enumerators. Using this connection together with the properties of these enumerators,
Rains~\cite{rainsnonbinary} derived the quantum Singleton bound for nonbinary
quantum codes.

The LP approach to quantum codes goes back to~\cite{shor1997quantum} and was
further refined by Rains~\cite{rains1999shadow} through the introduction of the
shadow enumerator, which provides additional constraints on admissible weight
distributions. Ashikhmin and Litsyu~\cite{ashikhmin1999upper} then formulated
this approach systematically and derived several fundamental upper bounds for
binary quantum codes by choosing auxiliary polynomials satisfying suitable
conditions. Their method was later extended to the nonbinary stabilizer setting
in~\cite{ketkar2006nonbinary}. More recently, Lai and
Ashikhmin~\cite{lai2017linear} generalized this framework to
entanglement-assisted quantum codes by introducing split weight
enumerators~\cite{simonis1995macwilliams}.

Motivated by this line of work, we derive upper bounds on the dimension of pure
disjoint $(r,\delta)$-qLRCs using quantum weight enumerators. We first formulate
local KL conditions and introduce blockwise SL and unitary weight enumerators
that record how error weight is distributed across recovery sets. This framework
yields a Singleton-like bound that strengthens the known bound for general
disjoint $(r,\delta)$-qLRCs. In the special case $\delta=2$, our approach recovers the corresponding stabilizer $(r,2)$-qLRC  bound~\cite{galindo2026quantum} without assuming a stabilizer structure, under the additional assumption of disjoint recovery sets. Finally, we use the same
blockwise enumerator framework to derive an LP upper bound on the code
dimension.

\vspace{-0.15cm}

\section{Preliminaries}\label{sec:prelim}\vspace{-0.15cm}
Let $\mathcal H = (\mathbb C^q)^{\otimes n}$ denote the Hilbert space of $n$ qudits of
local dimension $q$. A quantum code $\mathcal Q$ of length $n$, dimension $K$, and
minimum distance $d$ is a $K$-dimensional subspace of $\mathcal H$, and is denoted by
$((n,K,d))_q$. If $K=q^k$, we write $k=\log_q K$ and denote the code by $[[n,k,d]]_q$.
Let $\mathcal D(\mathcal Q)$ denote the set of density operators whose support is contained
in $\mathcal Q$. For any subset $I\subseteq[n]:=\{1,\dots,n\}$, define the subsystem
$\mathcal H_I := \bigotimes_{i\in I}\mathbb C^q$, and let $I^c := [n]\setminus I$.
A quantum channel $\widetilde{\mathcal N}$ on $\mathcal H$ is \emph{supported on $I$}
if it admits Kraus operators of the form
$\widetilde K_a=K_a\otimes \mathbb I_{I^c}$. Equivalently,
$\widetilde{\mathcal N}=\mathcal N^I\otimes \mathrm{id}_{I^c}$, where
$\mathcal N^I$ is a quantum channel on $\mathcal H_I$ with Kraus operators
$\{K_a\}_a$.\vspace{-0.1cm}

\begin{definition}[$(r,\delta)$-qLRC] \label{def:qLRCs}
A quantum code $\mathcal Q$ is an \emph{$(r,\delta)$-qLRC} if for every
qudit $i\in[n]$ there exists a recovery set $R_i\subseteq[n]$ such that
$i\in R_i$ and $|R_i|\le N:=r+\delta-1$, and for every subset $I\subseteq R_i$ with
$|I|\le \delta-1$, there exists a quantum channel
$\widetilde{\mathcal R}^{R_i}_I$ supported on $R_i$ such that 
\begin{align*}
\widetilde{\mathcal R}^{R_i}_I \circ \widetilde{\mathcal N}^{I}(\rho)
=
\rho,
\qquad
\forall\, \rho\in\mathcal D(\mathcal Q),
\end{align*}
for every quantum channel $\widetilde{\mathcal N}^I$ supported on $I$.
\end{definition}

That is, for each $i\in[n]$, any located error affecting at most $\delta-1$ qudits inside the local group $R_i$ can be corrected by a recovery operation supported on $R_i$.

\begin{definition}[Disjoint $(r,\delta)$-qLRCs] \label{def:qlrcsdis}
A \emph{disjoint $(r,\delta)$-qLRC} is an $(r,\delta)$-qLRC whose recovery sets $\{R_i\}_{i=1}^n$ are either identical or disjoint, that is,
\begin{align*}
R_i \cap R_j = \emptyset \quad \text{or} \quad R_i = R_j,
\qquad \forall\, i,j\in[n].
\end{align*}
The distinct recovery sets can then be labeled as $\{J_\ell\}_{\ell=1}^s$, and they form a partition of $[n]=\bigsqcup_{\ell=1}^s J_\ell$.
\end{definition}

 Disjoint recovery sets induce a blockwise error decomposition that is essential for the weight enumerator arguments.

\begin{remark}
For simplicity, we assume that all recovery blocks have the same size $|J_\ell|=N$ and that $N\mid n$, so that $s=n/N$.
The results extend to unequal block sizes and to $N\nmid n$.
\end{remark}

\subsection{Local structure of disjoint $(r,\delta)$-qLRCs} \label{sec:onb}
For disjoint qLRCs, locality induces a blockwise structure on the code. Let
$\mathcal{Q}$ be a disjoint $(r,\delta)$-qLRC of dimension $K$, and fix an
orthonormal basis (ONB) $\{\ket{\psi_i}\}_{i=1}^K$ of $\mathcal{Q}$. Denote~by
\begin{align*}
P \;:=\; \sum_{i=1}^K \ket{\psi_i}\bra{\psi_i},
\end{align*}
the orthogonal projector onto the code space. Since the recovery sets are disjoint, they induce a partition of the $n$ physical
qudits into $s$ blocks $\cH 
\;=\;
\bigotimes_{\ell=1}^s \mathcal{H}_{J_\ell}$ with 
$\mathcal{H}_{J_\ell} \cong (\mathbb{C}^q)^{\otimes N}$.
\begin{definition}[Local code support]
For each recovery block $J_\ell$, define the \emph{local code support}
\begin{align*}
\mathcal{Q}_\ell
\;:=\;
\operatorname{supp}\bigl(\operatorname{Tr}_{J_\ell^c}(P)\bigr)
\;\subseteq\;
\mathcal{H}_{J_\ell},
\end{align*}
and let $P_\ell$ denote the orthogonal projector onto $\mathcal{Q}_\ell$, referred to
as the \emph{local projector}. Write
$a_\ell := \dim \mathcal{Q}_\ell
\le \dim \mathcal{H}_{J_\ell}$.\vspace{-0.1cm}
\end{definition}
By construction, $\mathcal Q_\ell$ is the smallest subspace of
$\mathcal H_{J_\ell}$ that supports all reduced code states on block $J_\ell$.
Thus, it captures the local degrees of freedom accessible within that block.
Choose, for each $\ell$, an arbitrary ONB
$\{|\alpha_t^{(\ell)}\rangle\}_{t=1}^{a_\ell}$ of $\mathcal Q_\ell$. Then
\begin{align*}
\ket{\alpha_{\mathbf t}}
:=
\bigotimes_{\ell=1}^s |\alpha_{t_\ell}^{(\ell)}\rangle,
\qquad
\mathbf t=(t_1,\dots,t_s)\in\prod_{\ell=1}^s[a_\ell],
\end{align*}
forms an ONB of the product space $\bigotimes_{\ell=1}^s\mathcal Q_\ell$.
Since $\operatorname{supp}(\operatorname{Tr}_{J_\ell^c}(|\psi\rangle\langle\psi|))\subseteq \mathcal Q_\ell$, we have
$|\psi\rangle\in \mathcal Q_\ell\otimes\mathcal H_{J_\ell^c}$ for every $\ell$.
Intersecting over all $\ell$ and using $\mathcal H=\bigotimes_{\ell=1}^s\mathcal H_{J_\ell}$ yields
$|\psi\rangle\in\bigotimes_{\ell=1}^s \mathcal Q_\ell$. Hence, \vspace{-0.15cm}
% every 
% $\ket{\psi_i}\in \mathcal{Q}\subseteq
% \bigotimes_{\ell=1}^s \mathcal{Q}_\ell$ admits the decomposition\vspace{-0.1cm}
\begin{align}
\ket{\psi}
\;=\;
\sum_{\mathbf t}
\nu_{\mathbf t}^{(\ket{\psi})} \,\ket{\alpha_{\mathbf t}},
\quad \forall \ket{\psi}\in \mathcal{Q}\subseteq
\bigotimes_{\ell=1}^s \mathcal{Q}_\ell .\label{eq:decomp}
\end{align}
In general, this inclusion is strict, since local supports do not capture inter-block correlations.

To define the quantum weight enumerators, we fix an orthonormal operator basis. Let $\{E_j\}_{j=0}^{q^2-1}$ be an operator basis of $\mathbb C^q$ satisfying
$\Tr \big (E_j^\dagger E_k\big )=q\,\delta_{j,k}$, with $E_0=\mathbb I$. This induces the tensor-product operator basis
$E_{\boldsymbol i}=E_{i_1}\otimes\cdots\otimes E_{i_n}$ on
$(\mathbb C^q)^{\otimes n}$, where
$\boldsymbol i\in\{0,\dots,q^2-1\}^n$, and
$\Tr\big (E_{\boldsymbol i}^\dagger E_{\boldsymbol j}\big)=q^n\delta_{\boldsymbol i,\boldsymbol j}$.
For $E=F_1\otimes\cdots\otimes F_n$, define
$\supp(E):=\{j\in[n]:F_j\neq\mathbb I\}$ and $\wt(E):=|\supp(E)|$.

\subsection{Krawtchouk polynomials and expansion}\vspace{-0.05cm}
To analyze blockwise quantum weight enumerators and formulate LP--based bounds, we use Krawtchouk polynomials and their multivariate generalizations. For an alphabet of size $q^2$, the Krawtchouk polynomials of degree $0 \le i \le n$ in the variable $0 \le k \le n$ are defined by \vspace{-0.05cm}
\begin{align*}
K^{(n)}_i(k)
=
\sum_{j=0}^i
(-1)^j (q^2-1)^{i-j}
\binom{k}{j}\binom{n-k}{i-j}.
\end{align*}
These polynomials satisfy the discrete orthogonality relation $\sum_{j=0}^n K^{(n)}_i(j)\,K^{(n)}_j(m)
=
q^{2n}\,\delta_{im}.$
Hence, they form a basis for the space of
polynomials of degree at most $n$. Let $n=sN$ where $s$ and $N$ are positive integers, and
write $\bi=(i_1,\dots,i_s)$ and $\bk=(k_1,\dots,k_s)$ with
$i_\ell,k_\ell\in\mathcal I_N:=\{0,\dots,N\}$. The multivariate Krawtchouk
polynomials are defined by
$K_{\bi}(\bk)=\prod_{\ell=1}^s K_{i_\ell}^{(N)}(k_\ell)$. Then the family
$\{K_{\bi}\}_{\bi\in\mathcal I_N^s}$ forms a basis for the space of multivariate
polynomials in $s$ variables with degree at most $N$ in each coordinate. Hence, any such polynomial $f(\bk)$ admits a unique Krawtchouk expansion
\vspace{-0.05cm}
\begin{align}
f(\bk)
=
\sum_{\bi\in\mathcal{I}_N^s} f_{\bi}\,K_{\bi}(\bk),  \label{eq:expansion}
\end{align}
where $f_{\bi}
= q^{-2n}
\sum_{\bk\in\mathcal{I}_N^s}
f(\bk)\,K_{\bk}(\bi).$ 
% \begin{align*}
% f_{\bi}
% =
% \frac{1}{q^{2n}}
% \sum_{\bk\in\mathcal{I}_N^s}
% f(\bk)\,K_{\bk}(\bi).
% \end{align*}

\section{KL Conditions for Disjoint qLRCs}
In this section, we derive local KL conditions for disjoint
$(r,\delta)$-qLRCs, since the standard KL conditions do not encode the locality constraint on the recovery operation.

\begin{theorem}[Local KL condition]
\label{thm:local-KL}
Let $\mathcal Q$ be a disjoint $(r,\delta)$-qLRC with partition
$\{J_\ell\}_{\ell=1}^s$. Fix a block $J_\ell$ and let $P_\ell$ be the local
projector. Consider a quantum channel $\widetilde{\mathcal N}_\ell$ supported on
$J_\ell$, with Kraus operators $\{K_a^{(\ell)}\otimes \mathbb I_{J_\ell^c}\}_a$.
Then there exists a quantum channel $\widetilde{\mathcal R}_\ell$ supported on
$J_\ell$ such that
\begin{align*}
(\widetilde{\mathcal R}_\ell \circ \widetilde{\mathcal N}_\ell)(\rho)
=
\rho,
\qquad
\forall\, \rho \in \mathcal{D}(\mathcal Q),
\end{align*}
if and only if
\begin{align*}
P_\ell (K_a^{(\ell)})^\dagger K_b^{(\ell)} P_\ell
=
\widetilde C^{(\ell)}_{ab}\, P_\ell,
\qquad
\forall\, a,b,
\end{align*}
for some scalars $\widetilde C^{(\ell)}_{ab}\in\mathbb C$. Equivalently,
\begin{align*}
\langle \alpha_t^{(\ell)}|
(K_a^{(\ell)})^\dagger K_b^{(\ell)}
|\alpha_{t'}^{(\ell)}\rangle
=
\widetilde C^{(\ell)}_{ab}\,\delta_{tt'},
\qquad
\forall\, a,b,\ t,t'\in[a_\ell],
\end{align*}
where $\{|\alpha_t^{(\ell)}\rangle\}_{t=1}^{a_\ell}$ is an ONB of $\mathcal Q_\ell$.
\end{theorem}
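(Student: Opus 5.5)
The plan is to reduce both directions to the ordinary Knill--Laflamme theorem, applied to the \emph{local} code $\mathcal Q_\ell\subseteq\mathcal H_{J_\ell}$ with errors $\{K_a^{(\ell)}\}_a$. The bridge is the following claim. Let $\Phi:=\mathcal R_\ell\circ\mathcal N_\ell$ be the channel on $\mathcal H_{J_\ell}$, so that $\widetilde{\mathcal R}_\ell\circ\widetilde{\mathcal N}_\ell=\Phi\otimes\mathrm{id}_{J_\ell^c}$. Then $\Phi\otimes\mathrm{id}_{J_\ell^c}$ fixes every $\rho\in\mathcal D(\mathcal Q)$ if and only if $\Phi(O)=O$ for every operator $O$ on $\mathcal Q_\ell$. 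Once this claim holds, a local recovery for $\mathcal Q$ exists iff there is a channel $\mathcal R_\ell$ on $\mathcal H_{J_\ell}$ correcting $\mathcal N_\ell$ on $\mathcal Q_\ell$. By standard KL this is equivalent to $P_\ell (K_a^{(\ell)})^\dagger K_b^{(\ell)}P_\ell=\widetilde C^{(\ell)}_{ab}P_\ell$. The basis form follows by sandwiching this identity between $\langle\alpha_t^{(\ell)}|$ and $|\alpha_{t'}^{(\ell)}\rangle$, using $P_\ell=\sum_t|\alpha_t^{(\ell)}\rangle\langle\alpha_t^{(\ell)}|$.

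For the direction ``$\Leftarrow$'' of the claim, I use the decomposition \eqref{eq:decomp}: every code state lies in $\mathcal Q_\ell\otimes\mathcal H_{J_\ell^c}$. Any $\rho\in\mathcal D(\mathcal Q)$ is therefore a combination of terms $O\otimes X$ with $O\in B(\mathcal Q_\ell)$. Since $\Phi$ fixes each such $O$, the map $\Phi\otimes\mathrm{id}$ fixes $\rho$. On the code side, KL for $\mathcal Q_\ell$ supplies the channel $\mathcal R_\ell$, and $\widetilde{\mathcal R}_\ell:=\mathcal R_\ell\otimes\mathrm{id}$ is supported on $J_\ell$.

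The direction ``$\Rightarrow$'' is the main step. Suppose $\Phi\otimes\mathrm{id}$ fixes all code density operators. I proceed in three steps.
\begin{enumerate}
\item By linearity and polarization (apply it to $|\psi+\phi\rangle$ and $|\psi+i\phi\rangle$), $\Phi\otimes\mathrm{id}$ fixes every $|\psi\rangle\langle\phi|$ with $|\psi\rangle,|\phi\rangle\in\mathcal Q$.
\item For any operator $X$ on $\mathcal H_{J_\ell^c}$, the map $Z\mapsto\operatorname{Tr}_{J_\ell^c}[(\mathbb I\otimes X)Z]$ commutes with $\Phi\otimes\mathrm{id}$. Hence $\Phi$ fixes every operator $\operatorname{Tr}_{J_\ell^c}[(\mathbb I\otimes X)|\psi\rangle\langle\phi|]$.
\item Take Schmidt decompositions $|\psi\rangle=\sum_j\sqrt{\lambda_j}|u_j\rangle|v_j\rangle$ and $|\phi\rangle=\sum_k\sqrt{\mu_k}|w_k\rangle|z_k\rangle$. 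Choosing $X=|v_j\rangle\langle z_k|$ gives $\sqrt{\lambda_j\mu_k}\,|u_j\rangle\langle w_k|$. So $\Phi$ fixes $|x\rangle\langle y|$ for all $x\in S_\psi:=\operatorname{supp}\operatorname{Tr}_{J_\ell^c}|\psi\rangle\langle\psi|$ and all $y\in S_\phi$.
\end{enumerate}

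It remains to check that $\sum_{\psi\in\mathcal Q}S_\psi=\mathcal Q_\ell$. This holds because $\operatorname{Tr}_{J_\ell^c}P=\sum_i\operatorname{Tr}_{J_\ell^c}|\psi_i\rangle\langle\psi_i|$ is a sum of positive operators, and the support of such a sum is the span of the individual supports. Therefore $\Phi$ fixes a spanning set of $B(\mathcal Q_\ell)$, and by linearity it acts as the identity there. In particular $\mathcal R_\ell$ recovers $\mathcal N_\ell$ on $\mathcal Q_\ell$, and standard KL yields the local condition.

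I expect the delicate point to be step 3. Fixing the individual pure code states only controls the diagonal blocks $B(S_\psi)$. The off-diagonal pieces between different $S_\psi$ require using the cross terms $|\psi\rangle\langle\phi|$ from step 1 together with arbitrary operators $X$ on the complement.
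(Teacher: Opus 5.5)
Your proof is correct. It follows the paper's overall reduction: apply standard KL to the local code $\mathcal Q_\ell$, and for sufficiency tensor the local recovery with $\mathrm{id}_{J_\ell^c}$ and use \eqref{eq:decomp}. The difference is in the key necessity step, showing that $\Phi=\mathcal R_\ell\circ\mathcal N_\ell$ acts as the identity on operators on $\mathcal Q_\ell$, where you use a different argument.

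The paper's sketch purifies the maximally mixed code state, $|\Omega\rangle=K^{-1/2}\sum_i|\psi_i\rangle|i\rangle_R$. Its marginal on $J_\ell$ is $\operatorname{Tr}_{J_\ell^c}(P)/K$, so a single Schmidt decomposition across $J_\ell$ versus $J_\ell^c\cup R$ has local Schmidt vectors $\{|e_m\rangle\}$ forming an ONB of $\mathcal Q_\ell$. Linear independence of the partner operators $|f_m\rangle\langle f_{m'}|$ then forces $\Phi(|e_m\rangle\langle e_{m'}|)=|e_m\rangle\langle e_{m'}|$.

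You stay on $\mathcal H$ instead:
\begin{itemize}
\item polarization gives the cross terms $|\psi\rangle\langle\phi|$;
\item slicing by an arbitrary $X$ on $J_\ell^c$ replaces the linear-independence step;
\item the fact that the support of a sum of positive operators is the sum of the supports glues the pieces $S_{\psi_i}$ into $\mathcal Q_\ell$.
\end{itemize}
The purification route is shorter, because the definition of $\mathcal Q_\ell$ enters directly through the marginal and no spanning lemma is needed. Your route avoids an auxiliary reference system. It also makes explicit a step the paper's sketch leaves implicit: the off-diagonal terms $|\psi_i\rangle\langle\psi_j|$ must be fixed, which the purification argument also needs in order to conclude that $|\Omega\rangle\langle\Omega|$ is fixed.

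One small slip in step 3. With $X=|v_j\rangle\langle z_k|$, the coefficient picked up by $|u_{j'}\rangle\langle w_{k'}|$ is proportional to $\langle z_{k'}|v_j\rangle\langle z_k|v_{j'}\rangle$. This is not a Kronecker delta, since $\{|v_j\rangle\}$ and $\{|z_k\rangle\}$ are unrelated orthonormal families. Take $X=|z_k\rangle\langle v_j|$ instead. Because step 2 holds for every $X$, this is only cosmetic.
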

\begin{proof}[Proof sketch]
Fix a block $J_\ell$ and suppress the superscript $(\ell)$. Write
$\widetilde{\mathcal N}_\ell=\mathcal N_\ell\otimes \mathrm{id}_{J_\ell^c}$, with Kraus operators $\{K_a\otimes \mathbb I_{J_\ell^c}\}_a$. Let $\{|\alpha_t\rangle\}_{t=1}^{a_\ell}$ be an ONB of the local code support
$\mathcal Q_\ell$. Then every code basis vector $|\psi_i\rangle\in\mathcal Q$ can be
written as
$|\psi_i\rangle=\sum_{t=1}^{a_\ell}|\alpha_t\rangle\otimes |\beta_{t,i}\rangle$,
with $\sum_t\langle\beta_{t,i}|\beta_{t,j}\rangle=\delta_{ij}$.

\emph{Sufficiency.}
If $\langle \alpha_t|K_a^\dagger K_b|\alpha_{t'}\rangle
=\widetilde C_{ab}\delta_{tt'}$ for all $a,b,t,t'$, then this is exactly the usual KL condition on the local code support $\mathcal Q_\ell$. Hence, by the standard KL theorem, there exists a recovery map $\mathcal R_\ell$ on $\mathcal H_{J_\ell}$ correcting $\mathcal N_\ell$ on $\mathcal Q_\ell$. Extending it by the identity on $J_\ell^c$ yields a recovery $\widetilde{\mathcal R}_\ell=\mathcal R_\ell\otimes \mathrm{id}_{J_\ell^c}$ supported on $J_\ell$ that corrects $\widetilde{\mathcal N}_\ell$ on $\mathcal Q$.

\emph{Necessity.}
Conversely, suppose there exists a recovery
$\mathcal R_\ell\otimes \mathrm{id}_{J_\ell^c}$ such that
$((\mathcal R_\ell\circ\mathcal N_\ell)\otimes \mathrm{id}_{J_\ell^c})(\rho)=\rho$
for all $\rho\in\mathcal D(\mathcal Q)$. By considering a purification of the maximally mixed state on $\mathcal Q$, one
shows that $\mathcal R_\ell\circ\mathcal N_\ell$ acts as the identity on the
local support $\mathcal Q_\ell$.
Hence, $\mathcal N_\ell$ is perfectly correctable on the local code
$\mathcal Q_\ell\subseteq\mathcal H_{J_\ell}$, and the standard
KL necessity theorem yields
$P_\ell K_a^\dagger K_b P_\ell=\widetilde C_{ab}P_\ell$ for all $a,b$.
% \emph{Necessity.}
% Conversely, suppose there exists a recovery
% $\mathcal R_\ell\otimes \mathrm{id}_{J_\ell^c}$
% supported on $J_\ell$ such that
% $((\mathcal R_\ell\circ\mathcal N_\ell)\otimes \mathrm{id}_{J_\ell^c})(\rho)=\rho$
% for all $\rho\in\mathcal D(\mathcal Q)$. Since the noise and recovery act
% nontrivially only on $\mathcal H_{J_\ell}$, the induced channel
% $\mathcal R_\ell\circ\mathcal N_\ell$ acts trivially on the local support
% $\mathcal Q_\ell$. Equivalently,
% $\mathcal N_\ell$ is perfectly correctable on the code $\mathcal Q_\ell\subseteq
% \mathcal H_{J_\ell}$. Applying the standard KL necessity theorem to
% $\mathcal Q_\ell$ therefore yields
% $P_\ell K_a^\dagger K_b P_\ell=\widetilde C_{ab}P_\ell,\;\forall a,b$.
\end{proof}\vspace{-0.05cm}

% The local KL condition characterizes \emph{perfect correctability}
% of arbitrary quantum channels supported on a recovery block and depends only on
% the local code support.
The local KL condition shows that correctability within a recovery block depends only on the local code support $\mathcal Q_\ell$. \vspace{-0.05cm}

\begin{remark}
Since the local KL conditions are linear in the error operators, it is enough to work with the tensor-product operator basis $\mathcal E=\{E_{\bi}\}_{\bi}$ of $(\mathbb C^q)^{\otimes n}$ introduced in Section~\ref{sec:onb}.
\end{remark}

The \emph{global minimum distance} $d$ of $\mathcal Q$ is the smallest weight of an
operator $E\in\mathcal E$ that is not detectable on $\mathcal Q$, that is,
$\langle \psi_i|E|\psi_j\rangle \neq C_E\,\delta_{ij}$ for some $i,j$. Likewise, for a
recovery block $J_\ell$ with local code support $\mathcal Q_\ell$, the \emph{local
minimum distance} $\delta^{(\ell)}$ is the smallest weight of an operator
$E^{(\ell)}$ supported on $J_\ell$ that is not locally detectable on
$\mathcal Q_\ell$, that is,
$\langle \alpha_t^{(\ell)}|E^{(\ell)}|\alpha_{t'}^{(\ell)}\rangle \neq
C_E^{(\ell)}\,\delta_{tt'}$ for some $t,t'\in[a_\ell]$. The \emph{local distance} of
the code is then defined by $\delta:=\min_\ell \delta^{(\ell)}$.

Let $\mathcal Q$ be a disjoint $(r,\delta)$-qLRC with global distance $d$ and block
partition $\{J_\ell\}_{\ell=1}^s$. For a basis operator
$E=\bigotimes_{\ell=1}^s E^{(\ell)}$, let $i_\ell=\wt(E^{(\ell)})$ denote the number of affected qudits in block $J_\ell$, and write
$\boldsymbol i\in\mathcal I_N^s$ for the corresponding erasure pattern. By locality and disjointness, each block with $i_\ell\le\delta-1$ can be
corrected independently, while the remaining erasures can be corrected globally
provided their total weight is at most $d-1$. Therefore the set of correctable
erasure patterns is 
\begin{align*}
\mathcal{T}_{d,\delta}
=
\Big\{
\boldsymbol i\in\mathcal{I}_N^s \;\Big|\;
&\exists\, V\subseteq[s]\ \text{s.t.}\ 
\sum_{j\in V} i_j \le d-1,\\
& i_j\le\delta-1 \ \forall\, j\notin V
\Big\}.
\end{align*}
Equivalently, $\mathcal T_{d,\delta}$ consists of all
$\bi\in\mathcal I_N^s$ such that $\sum_{j:\,i_j\ge\delta} i_j \le d-1$.\vspace{-0.05cm}

\section{Blockwise SL and Unitary Weight Enumerators}
Here we introduce blockwise SL and unitary weight enumerators, which record how error weight is distributed across recovery blocks. 
Related refinements include split weight enumerators for classical codes~\cite{simonis1995macwilliams} and two-block split weight enumerators for entanglement-assisted quantum codes~\cite{lai2017linear}.
\vspace{-0.1cm}

\subsection{Blockwise SL weight enumerators}\vspace{-0.05cm}
Using the operator basis $\mathcal E=\{E\}$ fixed in Section~\ref{sec:onb}, write
$E=\bigotimes_{\ell=1}^s E^{(\ell)}$. For $\bi\in\mathcal I_N^s$ and operators $M_1,M_2$ on
$(\mathbb C^q)^{\otimes n}$, define the blockwise SL weight enumerators
\allowdisplaybreaks
\begin{align*}
A^{\mathrm{SL}}_{\bi}(M_1,M_2)
&:= \sum_{\substack{E\in\mathcal E\\ \wt(E^{(\ell)})=i_\ell\ \forall \ell}}
\Tr(EM_1)\,\Tr(E^\dagger M_2),\\
B^{\mathrm{SL}}_{\bi}(M_1,M_2)
&:= \sum_{\substack{E\in\mathcal E\\ \wt(E^{(\ell)})=i_\ell\ \forall \ell}}
\Tr(EM_1E^\dagger M_2).
\end{align*}
Summing over all $\bi$ such that $\sum_\ell i_\ell=i$ recovers the standard
weight-$i$ SL enumerators. When $M_1=M_2=P$, we write  $A^{\mathrm{SL}}_{\bi}$ and $B^{\mathrm{SL}}_{\bi}$.
We next state their basic properties. 
\vspace{-0.1cm}

\begin{theorem}[Properties of the blockwise SL weight enumerators]
\label{thm:LRCweight}
Let $\mathcal Q$ be a disjoint $(r,\delta)$-qLRC with parameters $((n,K,d))_q$, projector $P$, and blockwise SL weight enumerators
$\{A^{\mathrm{SL}}_{\bi}\}$ and $\{B^{\mathrm{SL}}_{\bi}\}$. Then:
\begin{enumerate}
\itemsep0.3em
\item $A^{\mathrm{SL}}_{\bi}\ge 0$ and
$K B^{\mathrm{SL}}_{\bi}\ge A^{\mathrm{SL}}_{\bi}$ for all
$\bi\in\mathcal I_N^s$. Moreover,
$A^{\mathrm{SL}}_{\mathbf 0}=K^2$ and $B^{\mathrm{SL}}_{\mathbf 0}=K$.

\item $A^{\mathrm{SL}}_{\bi}=K\,B^{\mathrm{SL}}_{\bi}$ for all
$\bi\in\mathcal T_{d,\delta}$.

\item They satisfy the MacWilliams identity \vspace{-0.1cm}
\begin{align*}
B^{\mathrm{SL}}_{\bj}
=
\frac{1}{q^n}\sum_{\bi\in\mathcal I_N^s}
A^{\mathrm{SL}}_{\bi}\,K_{\bj}(\bi).
\end{align*}
\end{enumerate}
\end{theorem}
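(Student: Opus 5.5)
I will prove the three items by adapting the standard Shor--Laflamme/Rains arguments blockwise. The key observation is that the blockwise enumerators are sums over the classes $\{E:\wt(E^{(\ell)})=i_\ell\ \forall\ell\}$, and these classes partition $\mathcal E$. Any inequality that Rains proves termwise for each $E$ therefore descends to each class.

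\emph{Item 1.} Since $E_0=\mathbb I$ and $\Tr(E_{\bi}^\dagger E_{\bj})=q^n\delta_{\bi\bj}$, the only $E$ with $\bi=\mathbf 0$ is the identity, so $A^{\mathrm{SL}}_{\mathbf 0}=\Tr(P)^2=K^2$ and $B^{\mathrm{SL}}_{\mathbf 0}=\Tr(P)=K$. For $A^{\mathrm{SL}}_{\bi}\ge0$, I write $\Tr(EP)\Tr(E^\dagger P)=|\Tr(EP)|^2$, using $\Tr(E^\dagger P)=\overline{\Tr(PE)}$ because $P$ is Hermitian. For $KB\ge A$, I expand in the code ONB:
\[
\Tr(EPE^\dagger P)=\sum_{i,j}|\langle\psi_i|E|\psi_j\rangle|^2,
\]
while
\[
|\Tr(EP)|^2=\Bigl|\sum_i\langle\psi_i|E|\psi_i\rangle\Bigr|^2\le K\sum_i|\langle\psi_i|E|\psi_i\rangle|^2
\]
by Cauchy--Schwarz. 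This gives $K\Tr(EPE^\dagger P)\ge|\Tr(EP)|^2$ termwise, and summing over the class of $\bi$ proves the inequality.

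\emph{Item 2.} Equality in the termwise inequality holds exactly when $\langle\psi_i|E|\psi_j\rangle=C_E\delta_{ij}$, i.e.\ when $PEP=C_EP$. So it suffices to show that every $E$ with pattern $\bi\in\mathcal T_{d,\delta}$ satisfies the global detection condition on $\mathcal Q$. This is the main obstacle. The characterization of $\mathcal T_{d,\delta}$ as the correctable erasure set was asserted, not proved, and correctability of the erasure set $\supp(E)$ is not literally the same as detectability of $E$.

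My plan is as follows. Let $S=\supp(E)$, $V=\{\ell:i_\ell\ge\delta\}$, and $S_V=S\cap\bigcup_{\ell\in V}J_\ell$, so $|S_V|\le d-1$. By the definition of $d$ and the standard equivalence, every operator supported on $S_V$ is detectable, i.e.\ the reduced state of every code state on $S_V$ is code-independent. Next, for each block $\ell\notin V$, the local distance $\delta$ together with Theorem~\ref{thm:local-KL} gives that operators of weight $\le\delta-1$ on $J_\ell$ are detectable on $\mathcal Q_\ell$, and hence on $\mathcal Q\subseteq\mathcal Q_\ell\otimes\mathcal H_{J_\ell^c}$.

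The delicate point is combining these into detectability of the tensor product $E$, since detectability of the factors does not in general imply detectability of the product. I will use erasure correctability instead. First the local erasures in blocks outside $V$ are corrected one block at a time, using the local recovery channels supported on those blocks. These act only on qudits disjoint from $S_V$ and from each other. What remains is an erasure on $S_V$ with $|S_V|\le d-1$, which is correctable globally. Correctability of the erasure of the whole set $S$ is equivalent, via the KL/decoupling formulation, to $\Tr_{S^c}$ of code states being code-independent up to the usual scaling. That in turn yields $PFP\propto P$ for every $F$ supported on $S$, in particular for $E$. I will make the sequential-recovery argument precise by composing the channels and applying the KL necessity direction to the composite channel.

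\emph{Item 3.} The MacWilliams identity is a purely algebraic consequence of the operator-basis completeness relation. I will start from
\[
\sum_{E\in\mathcal E_N}E\,X\,E^\dagger=q^N\Tr(X)\,\mathbb I
\]
on each block and its weight-refined version, in which the single-qudit sum over $E\neq\mathbb I$ of $EXE^\dagger$ equals $q\Tr(X)\mathbb I-X$. I will expand $B^{\mathrm{SL}}_{\bj}$ by writing $P=q^{-n}\sum_{F}\Tr(F^\dagger P)F$. Because everything factorizes over qudits, the conjugation sum over $E$ with pattern $\bj$ acting on $F$ with pattern $\bi$ produces $\prod_\ell K^{(N)}_{j_\ell}(i_\ell)\,F$: each qudit with $F\neq\mathbb I$ contributes $-1$ when it is hit and $q^2-1$ when it is not. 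This is exactly the single-block Shor--Laflamme computation, applied blockwise. Collecting terms then gives $B^{\mathrm{SL}}_{\bj}=q^{-n}\sum_{\bi}A^{\mathrm{SL}}_{\bi}K_{\bj}(\bi)$.
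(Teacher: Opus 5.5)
Your proof is correct. Items 1 and 3 are the standard Shor--Laflamme computations; for item 3 the paper only cites Lai--Ashikhmin, so your blockwise twirl calculation supplies the omitted argument. The genuine difference from the paper is in item 2.

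The paper never passes through erasure correction. Theorem~\ref{thm:local-KL} gives detectability on the \emph{local support}, namely $P_\ell E^{(\ell)}P_\ell=C^{(\ell)}P_\ell$ for $\ell\notin V$. Since $\mathcal Q\subseteq\bigotimes_\ell\mathcal Q_\ell$, one can sandwich $E$ between $\bigotimes_\ell P_\ell$ without changing any $\langle\psi_a|E|\psi_b\rangle$. The blocks $\ell\notin V$ then come out as scalars, leaving $E_V$ of weight at most $d-1$, which is detectable by the definition of $d$. So the ``delicate point'' you raise is real for detectability on $\mathcal Q$, but it disappears for detectability on $\mathcal Q_\ell$. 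That makes the paper's argument a two-line algebraic computation.

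Your operational route also works. It has the side benefit of actually justifying the paper's informal claim that patterns in $\mathcal T_{d,\delta}$ are correctable erasure patterns. To make it precise, you need three things:
\begin{enumerate}
\item Run it for a noise channel that factorizes over blocks, e.g.\ the completely depolarizing channel $\bigotimes_\ell\mathcal D_{S_\ell}$ on $S$. A general channel on $S$ need not factor, so block-by-block recovery does not apply to it directly.
\item Use that maps on disjoint blocks commute, so each pair $\mathcal R_\ell\circ\mathcal D_{S_\ell}$ with $\ell\notin V$ can be made to act on an unperturbed code state, where it is guaranteed to be the identity.
\item Apply KL necessity to this channel. Its Kraus operators $q^{-|S|}F$ ($F\in\mathcal E$ supported on $S$) include multiples of $\mathbb I$ and of $E$, which gives $PEP\propto P$.
\end{enumerate}

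There is one small slip in item 3, in how you describe the per-qudit factor. It is $1$ where $E$ acts trivially. Where $E$ acts nontrivially, it is $-1$ if $F\neq\mathbb I$ there and $q^2-1$ if $F=\mathbb I$ there. Counting hit positions inside and outside $\supp(F)$ in block $\ell$ then gives exactly $K^{(N)}_{j_\ell}(i_\ell)$, so your final identity is unaffected.
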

\begin{proof}
1) Nonnegativity is immediate from the definitions, and
$K B^{\mathrm{SL}}_{\bi}\ge A^{\mathrm{SL}}_{\bi}$ follows from the
Cauchy--Schwarz inequality. For $\bi=\mathbf 0$, only the identity operator
contributes, so
$A^{\mathrm{SL}}_{\mathbf 0}=\Tr(P)^2=K^2$ and
$B^{\mathrm{SL}}_{\mathbf 0}=\Tr(P^2)=K$.

2) Fix $\bi\in \mathcal{T}_{d,\delta}$ and let
$E=\bigotimes_{\ell=1}^s E^{(\ell)}$ have block weight distribution $\bi$.
Set $V:=\{\ell\in[s]:\, i_\ell\ge \delta\}$. Then for each $\ell\notin V$ we have
$i_\ell\le\delta-1$, so by the local KL conditions on $\mathcal Q_\ell$,
$\langle \alpha_t^{(\ell)}|E^{(\ell)}|\alpha_{t'}^{(\ell)}\rangle
= C^{(\ell)}\delta_{tt'}$. Using the decomposition in \eqref{eq:decomp}, it follows that
for all $|\psi_a\rangle,|\psi_b\rangle\in\mathcal Q$,\vspace{-0.05cm}
\begin{align*}
\langle\psi_a|E|\psi_b\rangle
=
\Big(\prod_{\ell\notin V}C^{(\ell)}\Big)
\langle\psi_a|E_V|\psi_b\rangle,
\end{align*}
where $E_V^{(\ell)}=E^{(\ell)}$ for $\ell\in V$ and
$E_V^{(\ell)}=\mathbb I$ for $\ell\notin V$. Since $\bi\in\mathcal T_{d,\delta}$, we have $\sum_{\ell\in V} i_\ell\le d-1$ and hence
$\wt(E_V)\le d-1$. By the global KL conditions, $\langle\psi_a|E_V|\psi_b\rangle=C_{\mathrm{global}}\delta_{ab}$, so \vspace{-0.05cm}
\begin{align*}
\langle\psi_a|E|\psi_b\rangle
=
\Big(\prod_{\ell\notin V}C^{(\ell)}\Big)C_{\mathrm{global}}\delta_{ab}.
\end{align*}
Thus every operator with block weight distribution $\bi$ acts as a scalar on
$\mathcal Q$, which implies
$A^{\mathrm{SL}}_{\bi}=K\,B^{\mathrm{SL}}_{\bi}$ for all
$\bi\in\mathcal T_{d,\delta}$.

3) This follows by adapting the arguments in \cite{lai2017linear} to more than two blocks and is omitted due to space constraints.
\end{proof}\vspace{-0.15cm}

% \begin{proof}
% Let $E = \bigotimes_{\ell=1}^s E_\ell$ be an error with block weight $\bi \in \mathcal{T}_{d,\delta}$.  
% Define $V \subseteq [s]$ as the set of blocks whose errors exceed the local distance $\delta-1$, the remaining blocks $[s]\setminus V$ are locally correctable.  
% By the local KL conditions, each $\ell \notin V$ contributes a constant $C^{(\ell)}$, independent of the code state. For the blocks in $V$, the joint error supported on $\cup_{\ell\in V}J_\ell$ has total
% weight at most $d-1$, so the global KL conditions apply. Hence, for any code states $\ket{\psi_i}, \ket{\psi_j}$,
% \begin{align*}
% \bra{\psi_i} E \ket{\psi_j} = \Big(\prod_{\ell \notin V} C^{(\ell)}\Big) C_{\mathrm{global}} \, \delta_{i,j}.
% \end{align*}
% Substituting into the definitions of $A_{\bi}$ and $B_{\bi}$ shows $KB_{\bi} = A_{\bi}$ for all $\bi \in \mathcal{T}_{d,\delta}$, while $KB_{\bi} \ge A_{\bi}$ follows from Cauchy--Schwarz. The McWilliams identity then follows as in standard quantum codes, with sums factorizing over blocks~\cite{lai2017linear}.
% \end{proof}

\subsection{Blockwise Unitary Weight Enumerators}
For $S\subseteq[n]$, define the unitary weight
enumerators~\cite{rains2002quantum}. \allowdisplaybreaks
\begin{align*}
A^{\mathrm U}_S(M_1,M_2)
&:= \Tr\bigl(\Tr_{S^c}(M_1)\,\Tr_{S^c}(M_2)\bigr),\\
B^{\mathrm U}_S(M_1,M_2)
&:= \Tr\bigl(\Tr_S(M_1)\,\Tr_S(M_2)\bigr).
\end{align*}
Thus $A^{\mathrm U}_S(M_1,M_2)=B^{\mathrm U}_{S^c}(M_1,M_2)$. Let $S_\ell:=S\cap J_\ell$. For $\bi\in\mathcal I_N^s$, define the blockwise unitary enumerators by\vspace{-0.05cm}
\begin{align}
A^{\mathrm{U}}_{\bi}(M_1,M_2)
&:= \sum_{\substack{S\subseteq[n]\\ |S_\ell|=i_\ell,\ \forall \ell}}
A^{\mathrm U}_S(M_1,M_2), \label{eq:u1}\\
B^{\mathrm{U}}_{\bi}(M_1,M_2)
&:= \sum_{\substack{S\subseteq[n]\\ |S_\ell|=i_\ell,\ \forall \ell}}
B^{\mathrm U}_S(M_1,M_2). \label{eq:u2}
\end{align}
Summing over all $\bi$ with $\sum_\ell i_\ell=i$ recovers the standard weight-$i$
unitary enumerators. Moreover,
$A^{\mathrm U}_{\bi}(M_1,M_2)=B^{\mathrm U}_{N-\bi}(M_1,M_2)$, where
$N-\bi:=(N-i_1,\dots,N-i_s)$.

To relate the blockwise unitary and SL enumerators, let
$T\subseteq[n]$ decompose as $T=\bigcup_\ell T_\ell$ with $T_\ell\subseteq J_\ell$, and
define
\begin{align*}
A^{\mathrm{SL}}_T(M_1,M_2)
&:= \sum_{\substack{E\in\mathcal E\\ \supp(E^{(\ell)})=T_\ell,\ \forall \ell}}
\Tr(EM_1)\Tr(E^\dagger M_2),
\end{align*}
with $B^{\mathrm{SL}}_T(M_1,M_2)$ defined analogously. Then, following
\cite{rains2002quantum}, for any $S\subseteq[n]$,
\begin{align*}
A^{\mathrm U}_S(M_1,M_2)
=
q^{-|S|}
\sum_{T_\ell\subseteq S_\ell,\ \forall \ell}
A^{\mathrm{SL}}_T(M_1,M_2),
\end{align*}
and the same relation holds for $B^{\mathrm U}_S$ with $B^{\mathrm{SL}}_T$.

Next, we express the blockwise unitary enumerators in terms of the blockwise SL enumerators.
\begin{corollary} \label{cor:enum}
For any operators $M_1,M_2$ on $(\mathbb C^q)^{\otimes n}$,
\begin{align*}
A^{\mathrm U}_{\bi}(M_1,M_2)
=
q^{-\sum_{\ell=1}^s i_\ell}
\sum_{\bk\le \bi}
c(\bk,\bi)\,A^{\mathrm{SL}}_{\bk}(M_1,M_2),
\end{align*}
where $\bk\le\bi$ denotes componentwise inequality and
$c(\bk,\bi):=\prod_{\ell=1}^s\binom{N-k_\ell}{N-i_\ell}$.
An analogous identity holds for $B^{\mathrm U}_{\bi}(M_1,M_2)$.
\end{corollary}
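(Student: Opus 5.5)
The plan is to start from the setwise relation quoted just before the statement (following \cite{rains2002quantum}):
\begin{align*}
A^{\mathrm U}_S(M_1,M_2)=q^{-|S|}\sum_{T_\ell\subseteq S_\ell,\ \forall\ell}A^{\mathrm{SL}}_T(M_1,M_2).
\end{align*}
We sum this over all $S$ with $|S_\ell|=i_\ell$ for every $\ell$, as in \eqref{eq:u1}. For every such $S$ we have $|S|=\sum_\ell i_\ell$, so the prefactor $q^{-\sum_\ell i_\ell}$ is common to all terms and comes out of the sum.

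What remains is the double sum $\sum_{S}\sum_{T\subseteq S}A^{\mathrm{SL}}_T$, where both $S$ and $T$ are understood blockwise. We interchange the order of summation and sum over $T$ first. A set $T=\bigcup_\ell T_\ell$ can appear only if $|T_\ell|=:k_\ell\le i_\ell$ for all $\ell$. For a fixed such $T$, we count the sets $S$ with $T_\ell\subseteq S_\ell\subseteq J_\ell$ and $|S_\ell|=i_\ell$. The choices factorize over blocks: in block $\ell$ we pick the $i_\ell-k_\ell$ extra elements from the $N-k_\ell$ positions of $J_\ell\setminus T_\ell$. This gives
\begin{align*}
\prod_\ell\binom{N-k_\ell}{i_\ell-k_\ell}=\prod_\ell\binom{N-k_\ell}{N-i_\ell}=c(\bk,\bi).
\end{align*}
This count depends on $T$ only through $\bk$.

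Next we group the sets $T$ by their block-size profile $\bk$. For the SL side, note that $\sum_{T:\,|T_\ell|=k_\ell}A^{\mathrm{SL}}_T=A^{\mathrm{SL}}_{\bk}$. This holds because the conditions $\supp(E^{(\ell)})=T_\ell$ for all $\ell$, as $T$ ranges over such sets, partition the basis operators with $\wt(E^{(\ell)})=k_\ell$. Combining the pieces yields the claimed identity. The $B$ version is obtained verbatim, using the analogous setwise relation for $B^{\mathrm U}_S$ and $B^{\mathrm{SL}}_T$.

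The only step needing care is the interchange and counting argument, in particular checking that the binomial convention $\binom{N-k}{i-k}=\binom{N-k}{N-i}$ matches the stated $c(\bk,\bi)$, and that terms with $k_\ell>i_\ell$ vanish. The rest is bookkeeping. If the quoted setwise relation from \cite{rains2002quantum} needs justification in the multi-block setting, it follows from the single-set version: the sum over $T_\ell\subseteq S_\ell$ for all $\ell$ is the same as summing over $T\subseteq S$.
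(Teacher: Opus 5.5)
Your proposal is correct and is essentially the paper's argument: sum the setwise relation over all $S$ with block profile $\bi$, interchange the sums, and count per block the $\binom{N-k_\ell}{i_\ell-k_\ell}=\binom{N-k_\ell}{N-i_\ell}$ supersets $S_\ell\supseteq T_\ell$ of size $i_\ell$, then multiply over $\ell$. You also make explicit two steps that the paper's one-line proof leaves implicit: factoring out the common prefactor $q^{-\sum_\ell i_\ell}$, and regrouping via $\sum_{T:\,|T_\ell|=k_\ell}A^{\mathrm{SL}}_T=A^{\mathrm{SL}}_{\bk}$.
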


\begin{proof}
This follows by counting, in each block $J_\ell$, the number of subsets
$S_\ell\supseteq T_\ell$ with $|S_\ell|=i_\ell$ for a fixed subset $T_\ell$ of
size $k_\ell$, and then multiplying over $\ell$.
\end{proof}

% \begin{corollary} \label{cor:enum}
% For any operators $M_1,M_2$ on $(\mathbb C^q)^{\otimes n}$,
% \begin{align*}
% A^{\mathrm{U}}_{\boldsymbol i}(M_1,M_2)
% &=
% q^{-\sum_{\ell=1}^s i_\ell}
% \sum_{\boldsymbol k\le\boldsymbol i}
% c(\boldsymbol k,\boldsymbol i)
% A^{\mathrm{SL}}_{\boldsymbol k}(M_1,M_2),
% \end{align*}
% where $\boldsymbol k \le \boldsymbol i$ denotes componentwise inequality and
% \begin{align*}
% c(\boldsymbol k,\boldsymbol i) := \prod_{\ell=1}^s \binom{N-k_\ell}{N-i_\ell}.
% \end{align*}
% An analogous relation holds for $B^{\mathrm{U}}_{\boldsymbol i}$.
% \end{corollary}

% \begin{proof}
% This follows by counting, independently in each block $J_\ell$, the number of choices of $S_\ell \supseteq T_\ell$ with $|S_\ell|=i_\ell$ containing
% a fixed $T_\ell$ of size $k_\ell$
% and grouping terms accordingly.
% \end{proof}

% We now relate $A^{\mathrm{U}}_{\boldsymbol i}$ and
% $B^{\mathrm{U}}_{\boldsymbol i}$ for a $(r,\delta)$-qLRC.
\begin{theorem} \label{thm:unitary-ineq}
Let $\mathcal Q$ be a $((n,K,d))_q$ quantum code with projector $P$. Then, for every
$\bi\in\mathcal I_N^s$,
\begin{align*}
K\,B^{\mathrm U}_{\bi}\ge A^{\mathrm U}_{\bi}.
\end{align*}
Equality holds if and only if $\mathcal Q$ can perfectly correct the erasure of every
subset $S\subseteq[n]$ such that $|S_\ell|\le i_\ell,\;\forall \ell\in[s]$.
\end{theorem}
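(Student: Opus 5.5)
The plan is to reduce the statement to Rains' setwise inequality for unitary weight enumerators and then sum over the appropriate sets. By \eqref{eq:u1}--\eqref{eq:u2}, $A^{\mathrm U}_{\bi}=\sum_S A^{\mathrm U}_S(P,P)$ and $B^{\mathrm U}_{\bi}=\sum_S B^{\mathrm U}_S(P,P)$, where both sums run over the same family $\mathcal S_{\bi}:=\{S\subseteq[n]: |S_\ell|=i_\ell\ \forall\ell\}$. It therefore suffices to show that for each fixed $S$
\begin{align*}
K\,B^{\mathrm U}_S(P,P)\ \ge\ A^{\mathrm U}_S(P,P),
\end{align*}
with equality if and only if $\mathcal Q$ corrects the erasure of $S$. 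Summing over $\mathcal S_{\bi}$ then gives the inequality. Since every summand difference $K B^{\mathrm U}_S-A^{\mathrm U}_S$ is nonnegative, equality of the sums forces equality for every $S\in\mathcal S_{\bi}$.

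For the setwise inequality I would follow Rains~\cite{rains2002quantum}. Write $\rho_{S^c}=\Tr_S(P)$ and $\rho_S=\Tr_{S^c}(P)$. Because $P$ is a projector of rank $K$, the Schmidt decomposition of the purification $\frac{1}{\sqrt K}\sum_i|\psi_i\rangle|i\rangle_R$ shows that $\Tr_S(P)$ and $\Tr_{S\cup R}$-type reductions share nonzero spectra. A cleaner route is to expand in the operator basis. By the relation $A^{\mathrm U}_S=q^{-|S|}\sum_{T\subseteq S}A^{\mathrm{SL}}_T$ and its analogue for $B$, we get
\begin{align*}
K B^{\mathrm U}_S-A^{\mathrm U}_S=q^{-|S|}\sum_{\supp(E)\subseteq S}\bigl(K\Tr(EPE^\dagger P)-|\Tr(EP)|^2\bigr).
\end{align*}
Each summand is nonnegative by Cauchy--Schwarz, as in item~1 of Theorem~\ref{thm:LRCweight}: $|\Tr(PEP)|^2\le \Tr(P)\Tr(PE^\dagger PEP)$. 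Equality in a summand holds iff $PEP=c_E P$.

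It remains to identify the equality case. Equality for all $E$ with $\supp(E)\subseteq S$ means $PEP=c_EP$ for all such $E$. Since products $E_a^\dagger E_b$ of operators supported on $S$ span all operators on $S$, this is exactly the KL condition for an arbitrary channel supported on $S$. It is therefore equivalent to perfect correctability of the erasure of $S$. Finally, correcting every $S$ with $|S_\ell|=i_\ell$ is equivalent to correcting every $S$ with $|S_\ell|\le i_\ell$, because correcting a superset implies correcting its subsets (as long as $i_\ell\le N$, any smaller set extends to one of the exact size). This yields the stated ``if and only if''.

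The main obstacle is keeping the equality case honest. One has to confirm that Cauchy--Schwarz equality $PEP\propto P$ for all basis $E$ on $S$ is genuinely equivalent to erasure correctability, since erasure correction must hold for all operators on $S$, not just basis elements. The remedy is linearity: the condition for basis operators extends by linearity to all of $\mathrm{span}\{E:\supp(E)\subseteq S\}$, which is closed under products.
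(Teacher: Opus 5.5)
Your proposal is correct and follows the paper's proof. You reduce to the setwise inequality $K B^{\mathrm U}_S\ge A^{\mathrm U}_S$, sum over all $S$ with $|S_\ell|=i_\ell$, and use nonnegativity of each summand to obtain the equality case set by set. You then pass from $|S_\ell|=i_\ell$ to $|S_\ell|\le i_\ell$ by monotonicity of erasure correction under inclusion.

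The only difference is that the paper cites Rains for the setwise inequality and its equality characterization. You instead derive them from $A^{\mathrm U}_S=q^{-|S|}\sum_{T\subseteq S}A^{\mathrm{SL}}_T$ and its $B$-analogue via termwise Cauchy--Schwarz and the KL condition, which is a valid self-contained substitute.
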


\begin{proof}
For every subset $S\subseteq[n]$, the unitary enumerators satisfy
$K\,B^{\mathrm U}_S\ge A^{\mathrm U}_S$. Summing over all $S$ such that
$|S_\ell|=i_\ell$ for every $\ell$ gives
$K\,B^{\mathrm U}_{\bi}\ge A^{\mathrm U}_{\bi}$.
For the equality statement, \eqref{eq:u1} and \eqref{eq:u2} give
\begin{align*}
K\,B^{\mathrm U}_{\bi}-A^{\mathrm U}_{\bi}
=
\sum_{\substack{S\subseteq[n]\\ |S_\ell|=i_\ell,\ \forall \ell}}
\bigl(K\,B^{\mathrm U}_S-A^{\mathrm U}_S\bigr).
\end{align*}
Each summand is nonnegative, so equality holds if and only if
$K\,B^{\mathrm U}_S=A^{\mathrm U}_S$ for every $S$ with $|S_\ell|=i_\ell$ for all
$\ell$. By Rains' equality characterization~\cite{rains2002quantum}, this is
equivalent to perfect correctability of the erasure of each such subset $S$. Since
erasure correction is monotone under inclusion, this is in turn equivalent to perfect
correctability of every subset $S'\subseteq[n]$ satisfying $|S'_\ell|\le i_\ell$ for
all $\ell$. The converse is immediate.
\end{proof}
% \begin{theorem} \label{thm:unitary-ineq}
% Let $\mathcal Q$ be a quantum code of dimension $K$ with projector $P$. Then, for all $\boldsymbol i \in \mathcal{I}_N^s$,
% \begin{align*}
% K\, B^{\mathrm{U}}_{\boldsymbol i} \ge A^{\mathrm{U}}_{\boldsymbol i}.
% \end{align*}
% Equality holds if and only if $\mathcal Q$ can perfectly recover from the erasure of every subset $S \subseteq [n]$ satisfying $|S_\ell| \le i_\ell,\:\forall \ell \in [s]$.
% \end{theorem}

% \begin{proof}
% For every subset $S\subseteq[n]$, Rains' unitary enumerators satisfy
% $K\,B^{\mathrm U}_S\ge A^{\mathrm U}_S$.
% Summing over all $S$ such that $|S_\ell|=i_\ell$ for all $\ell$ yields
% $K\,B^{\mathrm{U}}_{\boldsymbol i}\ge A^{\mathrm{U}}_{\boldsymbol i}$. For the equality statement, write using \eqref{eq:u1} and \eqref{eq:u2}, 
% \begin{align*}
% K\,B^{\mathrm{U}}_{\boldsymbol i} -A^{\mathrm{U}}_{\boldsymbol i}
% =\sum_{\substack{S\subseteq[n]\\ |S_\ell|=i_\ell\ \forall \ell}}
% \Bigl(K\,B^{\mathrm U}_S-A^{\mathrm U}_S\Bigr).
% \end{align*}
% Each summand is nonnegative, hence the left-hand side is zero if and only if
% $K\,B^{\mathrm U}_S=A^{\mathrm U}_S$ for every $S$ with $|S_\ell|=i_\ell$.
% By Rains' equality characterization \cite{rains2002quantum}, this condition is
% equivalent to perfect correctability of the erasure of each such subset $S$.
% Since correctability is monotone under inclusion, this is equivalent to correctability for all
% $S'$ with $|S'_\ell|\le i_\ell$.
% The converse is immediate.
% \vspace{-0.1cm}
% \end{proof}

\begin{remark}
The inequality holds for arbitrary quantum codes; locality will be used later to
interpret the equality condition.
\end{remark}

\section{Singleton-like bound for pure disjoint qLRCs}\vspace{-0.1cm}

Next, we derive a Singleton-like bound for pure disjoint $(r,\delta)$-qLRCs using blockwise weight enumerators.\vspace{-0.1cm}

\begin{definition}[Pure disjoint $(r,\delta)$-qLRC] \label{def:pure}
A disjoint $(r,\delta)$-qLRC with distance $d$ is pure if
$A^{\mathrm{SL}}_{\boldsymbol i} = 0$ for all
$\boldsymbol i \in \mathcal{T}_{d,\delta}\setminus\{\boldsymbol 0\}$.
\end{definition}

This is the blockwise analogue of the usual purity condition, restricted to the set of $(r,\delta)$-qLRC-correctable error patterns.\vspace{-0.15cm}

\begin{theorem}[Singleton-like bound for pure disjoint $(r,\delta)$-qLRCs] \label{thm:puresing}
Let $\mathcal Q$ be a pure disjoint
$(r,\delta)$-qLRC of length $n$ and minimum distance $d$. Then \vspace{-0.05cm}
\begin{align}
K \;\le\;
q^{\,n - 2(d-1) - 2\left\lfloor \frac{n-(d-1)}{N} \right\rfloor(\delta-1)}. \label{eq:singbound}
\end{align}
\end{theorem}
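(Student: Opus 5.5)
We follow Rains' proof of the quantum Singleton bound, but with the blockwise unitary enumerators: pick one erasure pattern $\bi$ and one complementary pattern $N-\bi$, both correctable, and compare them. The goal is to find $\bi\in\mathcal I_N^s$ with $\bi\in\mathcal T_{d,\delta}$, $N-\bi\in\mathcal T_{d,\delta}$, and total weight $\sum_\ell i_\ell$ as large as possible.

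\emph{Step 1: enumerators from purity.} Take $\bi\in\mathcal T_{d,\delta}$. Every $\bk\le\bi$ also lies in $\mathcal T_{d,\delta}$, since $\mathcal T_{d,\delta}$ is downward closed under componentwise order. By Definition~\ref{def:pure}, $A^{\mathrm{SL}}_{\bk}=0$ for $\bk\neq\mathbf 0$. By Theorem~\ref{thm:LRCweight}(2), $B^{\mathrm{SL}}_{\bk}=A^{\mathrm{SL}}_{\bk}/K=0$ for $\bk\neq\mathbf 0$, and $B^{\mathrm{SL}}_{\mathbf 0}=K$. Corollary~\ref{cor:enum} then gives
\begin{align*}
A^{\mathrm U}_{\bi}=q^{-|\bi|}c(\mathbf 0,\bi)K^2,\qquad B^{\mathrm U}_{\bi}=q^{-|\bi|}c(\mathbf 0,\bi)K,
\end{align*}
where $|\bi|:=\sum_\ell i_\ell$ and $c(\mathbf 0,\bi)=\prod_\ell\binom{N}{i_\ell}$.

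\emph{Step 2: the key inequality.} Now assume $N-\bi\in\mathcal T_{d,\delta}$ as well. Step 1 applied to $N-\bi$ gives $B^{\mathrm U}_{N-\bi}=q^{-(n-|\bi|)}c(\mathbf 0,N-\bi)K$. Since $\binom{N}{N-i_\ell}=\binom{N}{i_\ell}$, the binomial factors for $\bi$ and $N-\bi$ coincide. Theorem~\ref{thm:unitary-ineq} for the pattern $N-\bi$ gives $K B^{\mathrm U}_{N-\bi}\ge A^{\mathrm U}_{N-\bi}=B^{\mathrm U}_{\bi}$. Substituting, $K^2 q^{-(n-|\bi|)}\ge Kq^{-|\bi|}$, that is,
\begin{align*}
K\le q^{\,n-2|\bi|}.
\end{align*}
The enumerator inequality is used without its equality case, so we do not need to check any erasure-correctability interpretation here.

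\emph{Step 3: choosing $\bi$ (the main obstacle).} We must maximize $|\bi|$ subject to $\bi$ and $N-\bi$ both lying in $\mathcal T_{d,\delta}$, and show the maximum is at least $(d-1)+\lfloor (n-(d-1))/N\rfloor(\delta-1)$.

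- First, place $d-1$ units of weight into $\bi$ by filling whole blocks: set $i_\ell=N$ on $m=\lfloor (d-1)/N\rfloor$ blocks and $i_\ell=(d-1)-mN$ on one further block. In $N-\bi$ these become zeros and a single entry $N-((d-1)-mN)$.
- On the remaining blocks, set $i_\ell=\delta-1$. These entries are locally correctable in $\bi$. Their complements $N-\delta+1=r$ in $N-\bi$ are large, so $N-\bi$ must be treated symmetrically: its global part has to be budgeted against $d-1$ as well.

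The accounting in the second bullet is where the argument is delicate. If $N-\bi\in\mathcal T_{d,\delta}$ cannot be arranged directly, the first fallback is to replace the single Rains comparison with a two-set version. Choose disjoint $S,S'$ in which $S$ carries the $d-1$ global weight plus $\delta-1$ per remaining block, and $S'$ carries the symmetric pattern; the remainder $R=[n]\setminus(S\cup S')$ is handled with the trace inequality $\Tr\rho_{SS'}^2\ge\ldots$ (subadditivity-style in Rains' proof). The number of blocks outside the global region, about $\lfloor (n-(d-1))/N\rfloor$, each contributes $\delta-1$ to both $\bi$ and its counterpart, which yields the factor $2\lfloor\frac{n-(d-1)}{N}\rfloor(\delta-1)$. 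The second fallback is to allow the global region to be shared by both patterns and split the remaining blocks evenly between them.

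Once a valid $\bi$ is fixed, Step 2 gives \eqref{eq:singbound} immediately. The real work is the combinatorial verification that the chosen pattern and its complement both lie in $\mathcal T_{d,\delta}$.
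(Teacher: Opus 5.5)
\textbf{There is a genuine gap.} It starts with a reversed inequality in Step~2. With your Step~1 values and $c:=\prod_\ell\binom{N}{i_\ell}$, the relation $K B^{\mathrm U}_{N-\bi}\ge A^{\mathrm U}_{N-\bi}=B^{\mathrm U}_{\bi}$ reads $K^2q^{-(n-|\bi|)}c\ge Kq^{-|\bi|}c$. That is $K\ge q^{n-2|\bi|}$, a \emph{lower} bound on $K$. Applying $KB^{\mathrm U}\ge A^{\mathrm U}$ to the complementary pattern in this way bounds $K$ from below, not from above.

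This also shows that the condition you chase in Step~3 is the wrong target. If both $\bi$ and $N-\bi$ lie in $\mathcal T_{d,\delta}$, Step~1 makes all four enumerators exact, and duality then forces $K=q^{n-2|\bi|}$ exactly. So the requirement is far stronger than needed. It is also typically unattainable at the target weight $|\bi|=(d-1)+f(\delta-1)$, with $f:=\lfloor (n-(d-1))/N\rfloor$. The complement has weight $n-|\bi|$, usually far above $d-1$, with entry $r$ on each of the $f$ locally loaded blocks, yet its entries $\ge\delta$ would have to sum to at most $d-1$. The fallbacks you sketch (a two-set comparison, a subadditivity-type trace inequality, splitting blocks) are not carried out and do not address this.

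\textbf{The missing idea.} Use correctability on one side only and mere nonnegativity on the other, as in Rains' Singleton argument for pure codes; this is what the paper does. For $\bi\in\mathcal T_{d,\delta}$, your Step~1 gives $KB^{\mathrm U}_{\bi}=A^{\mathrm U}_{\bi}=q^{-|\bi|}cK^2$. Duality and Corollary~\ref{cor:enum}, keeping only the $\bk=\mathbf 0$ term because all $A^{\mathrm{SL}}_{\bk}\ge 0$, give
\begin{align*}
B^{\mathrm U}_{\bi}=A^{\mathrm U}_{N-\bi}
=q^{-(n-|\bi|)}\sum_{\bk\le N-\bi}c(\bk,N-\bi)\,A^{\mathrm{SL}}_{\bk}
\ge q^{-(n-|\bi|)}\,c\,K^2 .
\end{align*}
Hence $q^{-|\bi|}cK^2\ge q^{-(n-|\bi|)}cK^3$, i.e.\ $K\le q^{n-2|\bi|}$, with no condition at all on $N-\bi$.

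Step~3 then becomes easy. Put $\delta-1$ on $f$ blocks and distribute $d-1$ over the remaining $s-f$ blocks, which is possible since $n-fN\ge d-1$. Only those remaining blocks can carry entries $\ge\delta$, so $\bi\in\mathcal T_{d,\delta}$ and $|\bi|=(d-1)+f(\delta-1)$. Your Step~3 pattern essentially does this already.

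Once repaired, your Step~1 is a small simplification of the paper's route. It obtains $KB^{\mathrm U}_{\bi}=A^{\mathrm U}_{\bi}$ directly from purity and part~2 of Theorem~\ref{thm:LRCweight}. You therefore need neither the equality case of Theorem~\ref{thm:unitary-ineq} nor a separate local-plus-global erasure-recovery argument.
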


\begin{proof}
Set $f:=\left\lfloor \frac{n-(d-1)}{N}\right\rfloor$. Choose $f$ recovery blocks,
say $J_1,\dots,J_f$, and in each select a subset $T_\ell\subseteq J_\ell$ of size
$|T_\ell|=\delta-1$. Let $T:=\bigcup_{\ell=1}^f T_\ell$. Since
$n-fN\ge d-1$, we may also choose a subset
$V\subseteq [n]\setminus \bigcup_{\ell=1}^f J_\ell$ of size $|V|=d-1$.
Let $\bj$ be the corresponding blockwise erasure profile, so that
$j_\ell=\delta-1$ for $\ell\le f$ and $\sum_{\ell>f} j_\ell=d-1$. Set
$m:=\sum_{\ell=1}^s j_\ell=f(\delta-1)+(d-1)$.

Now let $S\subseteq[n]$ satisfy $|S_\ell|\le j_\ell$ for all $\ell$. For
$\ell\le f$, we have $|S_\ell|\le \delta-1$, so the erasures in those blocks are
locally correctable. After those local recoveries, the remaining erasures lie in the
blocks $\ell>f$, where their total number is at most
$\sum_{\ell>f} j_\ell=d-1$. These are therefore globally correctable. Hence every
subset $S$ with block profile bounded by $\bj$ is correctable. Therefore, by Theorem~\ref{thm:unitary-ineq}, we have
$K B^{\mathrm U}_{\bj}=A^{\mathrm U}_{\bj}$.
Using the duality relation $B^{\mathrm U}_{\bj}=A^{\mathrm U}_{N-\bj}$, we obtain 
\begin{align}
A^{\mathrm U}_{\bj}=K A^{\mathrm U}_{N-\bj}. \label{eq:equality}
\end{align}
Expanding both sides using Corollary~\ref{cor:enum}, we obtain
\begin{align*}
A^{\mathrm U}_{\bj}
&= q^{-m}\sum_{\bk\le\bj} c(\bk,\bj)\,A^{\mathrm{SL}}_{\bk},\\
A^{\mathrm U}_{N-\bj}
&= q^{-n+m}\sum_{\bk\le N-\bj} c(\bk,N-\bj)\,A^{\mathrm{SL}}_{\bk}.
\end{align*}
If $\bk\le\bj$, then $k_\ell\le\delta-1$ for $\ell\le f$, while
$\sum_{\ell>f}k_\ell\le \sum_{\ell>f}j_\ell=d-1$. Hence
$\bk\in\mathcal T_{d,\delta}$. By purity,
$A^{\mathrm{SL}}_{\bk}=0$ for all $\bk\neq \mathbf 0$, whereas
$A^{\mathrm{SL}}_{\mathbf 0}=K^2$. Therefore,
\begin{align*}
A^{\mathrm{U}}_{\boldsymbol j} = K^2 q^{-m} \binom{N}{\delta-1}^f \prod_{\ell=f+1}^s \binom{N}{j_\ell} .
\end{align*}
Similarly, $A^{\mathrm{U}}_{N-\boldsymbol j} \ge K^2 q^{-n+m}
\prod_{\ell=1}^f \binom{N}{\delta-1}\prod_{\ell=f+1}^s \binom{N}{j_\ell}$,
where the inequality follows from the positivity of the enumerators.
Substituting into \eqref{eq:equality}
yields $K \le q^{\,n-2m}$. \vspace{-0.15cm}
\end{proof}

\begin{remark} For general qLRCs, the expansion of
$A^{\mathrm U}_{\bj}$ may contain additional nonzero terms
$A^{\mathrm{SL}}_{\bk}$ with
$\bk\in\mathcal T_{d,\delta}\setminus\{\mathbf 0\}$.
These extra nonnegative contributions prevent the identity
$A^{\mathrm U}_{\bj}=K A^{\mathrm U}_{N-\bj}$
from yielding a bound on $K$.
\end{remark}\vspace{-0.1cm}

Next, we compare Theorem~\ref{thm:puresing} with the known bounds in the literature.
For disjoint $(r,\delta)$-qLRCs of length $n$ and minimum distance $d$, the quantum
dimension $k$ satisfies\vspace{-0.1cm}
\begin{align}
k \le n - 2(d-1) - 2\!\left(\frac{n}{N} - \Big\lceil \frac{d-1}{r}\Big\rceil\right)(\delta-1),
\label{slike}
\end{align}
which for $\delta=2$ reduces to the bound derived in~\cite{golowich2023quantum}. A direct comparison shows that \eqref{eq:singbound} strengthens the general disjoint
$(r,\delta)$-qLRC bound~\eqref{slike} under the additional purity assumption.

For pure stabilizer $(r,\delta)$-qLRCs, one also has\vspace{-0.1cm}
\begin{align}
   2d  \le  n -k - 2\Big( \Big\lceil\frac{n+k}{2r} \Big\rceil - 1 \Big)(\delta-1) + 2.
\label{singpurestab}
\end{align}
When $\delta=2$, Theorem~\ref{thm:puresing} agrees with the pure stabilizer
bound~\eqref{singpurestab}. For $\delta>2$, the comparison with~\eqref{singpurestab}
is more delicate because of the floor and ceiling terms, although the two bounds
coincide in several cases.

Theorem~\ref{thm:puresing} also matches the dimension bound obtained in the
dual-containing classical $(r,\delta)$-LRC setting. Indeed, the local-plus-global recovery argument guarantees correction of an erasure of size
$m=(d-1)+f(\delta-1)$, and hence the corresponding classical dimension satisfies
$k'\le n-m$. Substituting $k'=\frac{n+k}{2}$ yields exactly the bound in
Theorem~\ref{thm:puresing} for $K=q^k$. Thus the theorem recovers the corresponding
classical dimension bound in quantum form.

\section{Linear Programming Bound}
In this section, we derive an upper bound on the dimension of a pure disjoint
$(r,\delta)$-qLRC using an LP approach based on the blockwise weight enumerators
from Theorem~\ref{thm:LRCweight}.

\begin{theorem}[LP--based upper bound for pure disjoint $(r,\delta)$-qLRCs]
\label{thm:upper}
Let $\mathcal Q$ be a pure disjoint $(r,\delta)$-qLRC with parameters
$((n,K,d))_q$, and blockwise SL weight enumerators $\{A^{\mathrm{SL}}_{\bi}\}$ and $\{B^{\mathrm{SL}}_{\bi}\}$.
Let $f(\bx)$ be a polynomial with Krawtchouk expansion~\eqref{eq:expansion} such that
$f_{\bi}\ge 0$ for all $\bi\in\mathcal I_N^s$, $f_{\mathbf 0}>0$, and
$f(\bi)\le 0$ for all $\bi\in\mathcal T_{d,\delta}^{\mathrm{c}}$, where
$\mathcal T_{d,\delta}^{\mathrm{c}}:=\mathcal I_N^s\setminus\mathcal T_{d,\delta}$.
Then
\begin{align*}
K \le \frac{1}{q^n}\frac{f(\mathbf 0)}{f_{\mathbf 0}}.
\end{align*}
\end{theorem}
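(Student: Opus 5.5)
The plan is the standard Delsarte-type LP argument, with the blockwise enumerators of Theorem~\ref{thm:LRCweight} taking the place of the ordinary quantum weight enumerators. The idea is to evaluate the quantity $\Sigma:=\sum_{\bj\in\mathcal I_N^s} B^{\mathrm{SL}}_{\bj}\, f(\bj)$ in two ways. One way gives a lower bound in terms of $f_{\mathbf 0}$, using the Krawtchouk expansion and the MacWilliams identity. The other gives an upper bound in terms of $f(\mathbf 0)$, using purity and the sign condition on $\mathcal T^{\mathrm c}_{d,\delta}$.

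\emph{Step 1 (inverse MacWilliams).} First I would invert item 3) of Theorem~\ref{thm:LRCweight}. The multivariate Krawtchouk polynomials are products of univariate ones, so the orthogonality relation $\sum_{j} K^{(N)}_i(j)K^{(N)}_j(m)=q^{2N}\delta_{im}$ tensorizes to $\sum_{\bj}K_{\bi}(\bj)K_{\bj}(\bm m)=q^{2n}\delta_{\bi,\bm m}$. Multiplying $B^{\mathrm{SL}}_{\bj}=q^{-n}\sum_{\bm m}A^{\mathrm{SL}}_{\bm m}K_{\bj}(\bm m)$ by $K_{\bi}(\bj)$ and summing over $\bj$ then gives
\begin{align*}
\sum_{\bj} B^{\mathrm{SL}}_{\bj}K_{\bi}(\bj)=q^{n}A^{\mathrm{SL}}_{\bi}.
\end{align*}
I expect the main technical care to be needed here, in checking that the index order in the orthogonality relation matches the convention in the expansion \eqref{eq:expansion}. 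If it does not match, one uses the symmetry relation $\binom{n}{i}(q^2-1)^iK_k(i)=\binom{n}{k}(q^2-1)^kK_i(k)$ to swap the indices.

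\emph{Step 2 (lower bound).} Substituting $f(\bj)=\sum_{\bi}f_{\bi}K_{\bi}(\bj)$ gives
\begin{align*}
\Sigma=\sum_{\bi}f_{\bi}\sum_{\bj}B^{\mathrm{SL}}_{\bj}K_{\bi}(\bj)=q^n\sum_{\bi}f_{\bi}A^{\mathrm{SL}}_{\bi}.
\end{align*}
Every term in the last sum is nonnegative, since $f_{\bi}\ge0$ by hypothesis and $A^{\mathrm{SL}}_{\bi}\ge0$ by item 1). Keeping only the $\bi=\mathbf 0$ term gives $\Sigma\ge q^nf_{\mathbf 0}A^{\mathrm{SL}}_{\mathbf 0}=q^nf_{\mathbf 0}K^2$.

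\emph{Step 3 (upper bound).} Next I would split $\Sigma$ into the sum over $\mathcal T_{d,\delta}$ and the sum over its complement.
\begin{itemize}
\item For $\bj\in\mathcal T_{d,\delta}\setminus\{\mathbf 0\}$, item 2) gives $KB^{\mathrm{SL}}_{\bj}=A^{\mathrm{SL}}_{\bj}$, and purity (Definition~\ref{def:pure}) gives $A^{\mathrm{SL}}_{\bj}=0$. Hence $B^{\mathrm{SL}}_{\bj}=0$ for these $\bj$.
\item The only surviving term in $\mathcal T_{d,\delta}$ is $\bj=\mathbf 0$, which contributes $B^{\mathrm{SL}}_{\mathbf 0}f(\mathbf 0)=Kf(\mathbf 0)$.
\item On $\mathcal T^{\mathrm c}_{d,\delta}$ we have $f(\bj)\le0$. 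Also $B^{\mathrm{SL}}_{\bj}\ge A^{\mathrm{SL}}_{\bj}/K\ge0$ by item 1). So these terms are all nonpositive.
\end{itemize}
Therefore $\Sigma\le Kf(\mathbf 0)$.

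\emph{Conclusion.} Combining the two bounds gives $q^nf_{\mathbf 0}K^2\le Kf(\mathbf 0)$. Dividing by $Kq^nf_{\mathbf 0}>0$, which is allowed because $f_{\mathbf 0}>0$, yields $K\le f(\mathbf 0)/(q^nf_{\mathbf 0})$. The only subtle point is to pair $f$ with the $B$-enumerators rather than the $A$-enumerators. Pairing with $A$ produces the inequality in the wrong direction, a lower bound on $K$.
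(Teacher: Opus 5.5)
Your proposal is correct and is essentially the paper's route. The paper's proof is a one-line appeal to the Ashikhmin--Litsyn LP method applied to Theorem~\ref{thm:LRCweight} and Definition~\ref{def:pure}, and your argument is exactly that method written out: pair $f$ with the blockwise $B^{\mathrm{SL}}$-enumerators, invert the MacWilliams identity via the tensorized Krawtchouk orthogonality, and use purity with item 2) to make $B^{\mathrm{SL}}_{\bj}=0$ for $\bj\in\mathcal T_{d,\delta}\setminus\{\mathbf 0\}$. Your index check in Step 1 also goes through as written, since the stated orthogonality relation already has the required order and no symmetry swap is needed.
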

\begin{proof}
The claim follows from the LP method of~\cite{ashikhmin1999upper}, applied to
the blockwise weight enumerators in Thm.~\ref{thm:LRCweight} and 
Def.~\ref{def:pure}.
% It follows by the LP method in
% % The bound follows by applying the LP-based method of
% \cite{ashikhmin1999upper} applied to the blockwise weight enumerators in Thm.~\ref{thm:LRCweight} and Def.~\ref{def:pure}.
% that for pure codes,
% $A^{\mathrm{SL}}_{\boldsymbol i}=0$ for all
% $\boldsymbol i\in \mathcal{T}_{d,\delta}\setminus\{\boldsymbol 0\}$.
\end{proof}

% \begin{remark}
% For general $(r,\delta)$-qLRCs, one should impose the additional constraint 
% $f_{\boldsymbol{i}} > 0 \quad \forall \boldsymbol{i} \in \mathcal{T}_{d,\delta}$. Then, \vspace{-0.15cm}
% \begin{align*}
% K \le \frac{1}{q^n} \max_{\boldsymbol{i} \in \mathcal{T}_{d,\delta}} \frac{f(\boldsymbol{i})}{f_{\boldsymbol{i}}}.    
% \end{align*}
% \end{remark}
Note that the set of uncorrectable blockwise erasure patterns is $\mathcal T_{d,\delta}^{\mathrm{c}}=
\big\{\boldsymbol{i}\in\mathcal{I}_N^s \,\big|\, \sum_{j:\,i_j\ge\delta} i_j \ge d\big\}$.

\begin{theorem}[LP--based Singleton-like bound for pure disjoint $(r,\delta)$-qLRCs]\label{thm:sing1}
Let $\mathcal{Q}$ be a pure disjoint $(r,\delta)$-qLRC of length $n$ and minimum distance $d$. Then,
\begin{align}
    K \le \begin{dcases}
        q^{r(s- 2t) -s(\delta-1) + 2(\delta -\partial)} & \text{if}\;\; \partial \ge \delta, \\
        q^{r(s-2t) - s(\delta-1) } & \text{if}\;\; \partial < \delta,
        \end{dcases}
        \label{lpbound}
\end{align}
where $n=sN$ and $d = tN + \partial$  with $1\le \partial\le N$.
\end{theorem}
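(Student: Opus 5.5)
The plan is to apply Theorem~\ref{thm:upper} with a polynomial $f$ that factors over the recovery blocks, $f(\bk)=\prod_{\ell=1}^s g_{m_\ell}(k_\ell)$. Each factor $g_m$ is a univariate polynomial on $\{0,\dots,N\}$ that vanishes for $x\ge m$, has nonnegative Krawtchouk coefficients with respect to $K^{(N)}_i$, and has a computable ratio $g_m(0)/(g_m)_0$. The multivariate Krawtchouk basis is a product basis, so the coefficients of $f$ factor as $f_{\bi}=\prod_\ell (g_{m_\ell})_{i_\ell}$. Hence $f_{\bi}\ge 0$ and $f_{\mathbf 0}>0$ follow from the univariate properties, and the bound becomes
\begin{align*}
K\le q^{-n}\prod_{\ell=1}^s \frac{g_{m_\ell}(0)}{(g_{m_\ell})_0}.
\end{align*}

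\textbf{Univariate building block.} I will take the classical LP–Singleton polynomial $g_m(x)=\binom{N-x}{N-m+1}$ for $1\le m\le N+1$, which vanishes for $m\le x\le N$. The standard identity expressing $\binom{N-x}{j}$ in the Krawtchouk basis for alphabet size $q^2$ gives nonnegative coefficients, namely multiples of $\binom{N-i}{N-j}$. This yields
\begin{align*}
\frac{g_m(0)}{(g_m)_0}=q^{2(N-m+1)}.
\end{align*}
Two sanity checks: for $m=N+1$ we get $g\equiv 1$ and ratio $1$, and for $m=1$ we get ratio $q^{2N}$. I expect verifying this coefficient formula, and that it carries over unchanged to each block, to be the main technical step. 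I would do it by direct evaluation with the inversion formula $f_{\bi}=q^{-2n}\sum_{\bk} f(\bk)K_{\bk}(\bi)$ together with the Krawtchouk generating function.

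\textbf{Choice of thresholds and the vanishing condition.} Write $d=tN+\partial$.

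\emph{Case $\partial<\delta$.} Set $m_\ell=N+1$ (no constraint) on $t$ blocks and $m_\ell=\delta$ on the remaining $s-t$ blocks. Take $\bk\in\mathcal T^{\mathrm c}_{d,\delta}$, so its heavy blocks carry total weight at least $d>tN$. The $t$ free blocks contribute at most $tN$, so some other block has $k_\ell\ge\delta$, and therefore $f(\bk)=0$. The exponent is
\begin{align*}
-n+2(s-t)(N-\delta+1)=-n+2r(s-t).
\end{align*}
Using $n=rs+s(\delta-1)$, this equals $r(s-2t)-s(\delta-1)$.

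\emph{Case $\partial\ge\delta$.} Additionally set $m_{t+1}=\partial$ on one more block. Suppose $f(\bk)\neq 0$. Then $k_{t+1}<\partial$ and $k_\ell<\delta$ on the remaining $s-t-1$ blocks. The heavy weight is then at most $tN+k_{t+1}<d$, so $\bk\in\mathcal T_{d,\delta}$; contrapositively $f$ vanishes on $\mathcal T^{\mathrm c}_{d,\delta}$. The block $t+1$ now contributes $2(N-\partial+1)=2r+2(\delta-\partial)$ in place of $2r$, which gives the stated extra term $2(\delta-\partial)$ (a gain when $\partial>\delta$).

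When $\partial<\delta$ this extra block would not help, which explains why that case keeps threshold $\delta$. Edge cases, such as $t$ or $t+1$ exceeding $s$, should be checked separately.
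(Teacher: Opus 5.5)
Your proposal is correct and follows the paper's proof: the paper also applies Theorem~\ref{thm:upper} to a product polynomial that is unconstrained on $t$ blocks and vanishes for $x\ge\delta$ on the remaining blocks, except that when $\partial\ge\delta$ one block instead gets a factor vanishing for $x\ge\partial$. Its factors $q^{2r}\prod_{i=m}^{N}(1-x/i)$ are positive multiples of your $\binom{N-x}{N-m+1}$, so the ratio $f(\mathbf 0)/f_{\mathbf 0}$ is unchanged, and your ratio $q^{2(N-m+1)}$ follows from the standard identity $\sum_{i\le j}\binom{N-i}{N-j}K^{(N)}_i(x)=q^{2j}\binom{N-x}{j}$, which the paper leaves to ``standard Krawtchouk identities''.
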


\begin{proof}[Proof sketch]
% Let $n=sN$ and write $d=tN+\partial$, with $1\le \partial\le N$. 
For $\partial \ge \delta$, define, following~\cite{agarwal2018combinatorial},
\begin{align*}
f(\bx)
&=
\prod_{j=1}^{s-t-1}
\left[
q^{2r}\prod_{i=\delta}^{N}\left(1-\frac{x_j}{i}\right)
\right]
q^{2(r+\delta-\partial)}
\prod_{i=\partial}^{N}\left(1-\frac{x_{s-t}}{i}\right).
\end{align*}
For $\partial < \delta$, define 
\begin{align*}
f(\bx)
&=
\prod_{j=1}^{s-t}
q^{2r}\prod_{i=\delta}^{N}\left(1-\frac{x_j}{i}\right).
\end{align*}
Standard Krawtchouk identities imply $f_{\bi}\ge 0$ for all $\bi$. It remains to verify that $f(\bi)\le 0$ for all $\bi\in \mathcal T_{d,\delta}^{\mathrm{c}}$. Suppose first that $\partial\ge\delta$ and let $\bi\in \mathcal T_{d,\delta}^{\mathrm{c}}$. If $i_\ell\ge\delta$ for some $1\le \ell\le s-t-1$, then one of the first factors vanishes. Otherwise, $i_1,\dots,i_{s-t-1}<\delta$, so these coordinates do not contribute to $\sum_{j:\,i_j\ge\delta} i_j$. Since the coordinates $s-t+1,\dots,s$ contribute at most $tN$, the condition $\sum_{j:\,i_j\ge\delta} i_j\ge d=tN+\partial$ forces $i_{s-t}\ge \partial$, and hence the last factor vanishes. 

Now suppose that $\partial<\delta$ and let $\bi\in \mathcal T_{d,\delta}^{\mathrm{c}}$. If $i_1,\dots,i_{s-t}<\delta$, then only the coordinates $s-t+1,\dots,s$ contribute to $\sum_{j:\,i_j\ge\delta} i_j$, so $\sum_{j:\,i_j\ge\delta} i_j \le tN < tN+\partial = d$, a contradiction. Therefore some $i_\ell\ge\delta$ with $1\le \ell\le s-t$, and one of the factors vanishes.
Thus $f(\bi)=0$ for all $\bi\in \mathcal T_{d,\delta}^{\mathrm{c}}$. Therefore $f$ satisfies the conditions of Theorem~\ref{thm:upper}, which yields the stated bound.
\end{proof} \vspace{-0.1cm}
To compare Theorem~\ref{thm:sing1} with the bounds
\eqref{slike}, \eqref{singpurestab}, and \eqref{eq:singbound}, we relax the floor
and ceiling terms in a way that preserves the direction of the inequalities.
Relaxing the floor in \eqref{eq:singbound} gives
\begin{align}
\hspace{-0.325cm}k \le r(s - 2t) - s(\delta-1) +2(\delta-\partial)
+ \frac{2(\partial-1)(\delta-1)}{N}.
\label{eq:first}
\end{align}
Relaxing the ceiling in \eqref{singpurestab} gives
\begin{align}
k \le r(s - 2t) - s(\delta-1) +\frac{2r(\delta-\partial)}{N}.
\label{eq:third}
\end{align}
Likewise, relaxing the ceiling in \eqref{slike} gives
\begin{align}
k \le\;& r(s - 2t) - s(\delta-1) + 2(\delta - \partial) \nonumber\\
&\quad + 2\!\left(\frac{t(\delta-1) + (\partial - 1)}{r}\right)(\delta-1).
\label{eq:second}
\end{align}
Theorem~\ref{thm:sing1} improves on the relaxed bounds
\eqref{eq:first}, \eqref{eq:third}, and \eqref{eq:second}. If $\partial\ge\delta$,
then \eqref{eq:first} and \eqref{eq:second} each contain an additional
nonnegative term beyond the LP exponent, and are therefore weaker. The bound
\eqref{eq:third} is also weaker since $N\ge r$. If $\partial<\delta$, then each
relaxed bound still contains a strictly positive excess term relative to the LP
bound. Hence, the LP bound is strictly stronger than these relaxed bounds in both
regimes.

\begin{remark}
The original bounds exhibit the same qualitative behavior. In the parameter
regimes considered here, the LP bound \eqref{lpbound} coincides with the pure
stabilizer bound \eqref{singpurestab} and is at least as tight as the pure
disjoint bound \eqref{eq:singbound}.
\end{remark}

\section{Conclusion}
Quantum error correction is central to quantum communication and many other
quantum technologies, and is essential for unlocking the potential of future
quantum-enabled communication systems~\cite{schwenteck20236g,AMIRI2026401,AMIRI2026423}.
In this paper, we studied pure disjoint $(r,\delta)$-qLRCs. We formulated local
KL conditions and introduced blockwise weight enumerators that track how error
weight is distributed across recovery sets. This framework yields a
Singleton-like bound that strengthens the known bound for disjoint
$(r,\delta)$-qLRCs under the purity assumption, as well as an LP-based upper
bound on the code dimension. These results provide a non-stabilizer,
weight-enumerator-based approach to the study of pure disjoint
$(r,\delta)$-qLRCs. Future work includes extending these techniques to impure
quantum codes and to $(r,\delta)$-qLRCs with overlapping recovery sets, as in
the classical setting~\cite{gruica2023lrcs}.

\newpage

\bibliographystyle{ieeetr}
\bibliography{ref}
\end{document}